\newtheorem{theorem}{Theorem}
\begin{document}
\title{Performance Analysis of MIMO-NOMA Systems with Randomly Deployed Users}
\author[$\dag$]{Zheng~Shi}
\author[$\dag$]{Guanghua~Yang}
\author[$\S$]{Yaru~Fu}
\author[$\P$]{Hong~Wang}
\author[$\ddag$]{Shaodan~Ma}
\affil[$\dag$]{The School of Electrical and Information Engineering and Institute of Physical Internet, Jinan University, China}
\affil[$\S$]{Institute of Network Coding, The Chinese University of Hong Kong, Hong Kong}
\affil[$\P$]{School of Communication and Information Engineering, Nanjing University of Posts and Telecommunications, China}
\affil[$\ddag$]{Department of Electrical and Computer Engineering, University of Macau, Macao}
\maketitle
\begin{abstract}

This paper investigates the performance of Multiple-input multiple-output non-orthogonal multiple access (MIMO-NOMA) systems with randomly deployed users, where the randomly deployed NOMA users follow Poisson point process (PPP), the spatial correlation between MIMO channels are characterized by using Kronecker model, and the composite channel model is used to capture large-scale fading as well as small-scale fading. The spatial randomness of users' distribution, the spatial correlation among antennas and large-scale fading will severally impact the system performance, but they are seldom considered in prior literature for MIMO-NOMA systems, and the consideration of all these impact factors challenges the analysis. Based on zero-forcing (ZF) detection, the exact expressions for both the average outage probability and the average goodput are derived in closed-form. Moreover, the asymptotic analyses are conducted for both high signal-to-noise ratio (SNR) (/small cell radius) and low SNR (/large cell radius) to gain more insightful results. In particular, the diversity order is given by $\delta = {{N_r} - {M} + 1}$, the average outage probability of $k$-th nearest user to the base station follows a scaling law of $O\left({D^{\alpha \left( {{N_r} - {M} + 1} \right) + 2k}}\right)$, the average goodput scales as $O({D^{2}})$ and $O({D^{-2}})$ as $D \to 0$ and $D \to \infty$, respectively, where $N_r$, $M$, $\alpha$ and $D$ stand for the number of receive antennas, the total number of data streams, the path loss exponent and the cell radius, respectively. The analytical results are finally validated through the numerical analysis.

\end{abstract}
\begin{IEEEkeywords}
NOMA, MIMO, zero-forcing, order statistics.
\end{IEEEkeywords}
\IEEEpeerreviewmaketitle
\hyphenation{HARQ}
\section{Introduction}\label{sec:int}
%
%
%
%
%
%

The non-orthogonal multiple access (NOMA) has been envisioned to be a promising technique of 5G to achieve superior spectral efficiency to the conventional orthogonal multiple access (OMA), and the fundamental of NOMA is to multiplex multiple users in power domain, while they share the same frequency/time/code resources. Since multiple-input-multiple-out (MIMO) brings extra degrees of freedom, the integration of NOMA with MIMO has emerged as a new paradigm for further capacity enhancement and massive connectivity support\cite{dai2015non}.

Both optimal design and performance analysis for MIMO-NOMA system have attracted intensive research interest in the existing literature. Specifically, in \cite{sun2015ergodic}, a power allocation scheme was developed for the MIMO-NOMA system with a pair of users to maximize the ergodic capacity. In order to serve as many pairs of users as possible, a signal assignment based downlink and uplink MIMO-NOMA framework was proposed, where the signal assignment was particularly devised to remove inter-pair interference, and different power allocation strategies were then examined in \cite{ding2016general}. Bearing the idea of signal assignment in mind, it was further proved in \cite{liu2016capacity} that the ergodic capacity of MIMO-NOMA system outperforms that of MIMO-OMA system. Hence, the applications of NOMA to MIMO system in \cite{sun2015ergodic,ding2016general,liu2016capacity} were virtually developed for two typical NOMA users because the signal receptions for different pairs of users are processed independently. To accommodate more than two users, in \cite{wang2018novel}, the users within a cell were divided into multiple NOMA groups without a limit on the number of users in each group. By adopting group interference cancellation and minimum mean square error (MMSE) detector, a novel transceiver design was then proposed for MIMO-NOMA uplink transmission in \cite{wang2018novel}. Unfortunately, most of previous literature assumed perfect channel state information (CSI) at the transmitter, which would yield high signaling and computational overhead. Particularly in massive MIMO systems, the signal-to-noise (SNR) reporting and the computational complexity increase with the number of antennas, which will consume a significant proportion of resources \cite{ding2016design}. Therefore, NOMA schemes with limited CSI feedback are sometimes applied to MIMO systems \cite{ding2016design,ding2016application}. In \cite{ding2016mimo}, only statistical knowledge of CSI was assumed to be available to the base station for further reduction of the system overhead. Since the superiority of NOMA scheme considerably relies on the difference among users' channel conditions, the precoding and detection strategies were developed for MIMO-NOMA system to provide a new method to differ users' effective channel gains in \cite{ding2016mimo}.

Unfortunately, most prior works investigated the MIMO-NOMA systems by assuming a fixed topology and a fixed number of NOMA users, independence among MIMO channels, and only small-scale fading experienced by signals. Whereas the neglected factors including the spatial randomness of users' distribution, the spatial correlation among antennas, and the large-scale fading would inevitably impact the system performance, thus limit their wide applications in practice. To thoroughly examine these impact factors, a more general model for MIMO-NOMA systems is considered in this paper, in which the randomly deployed NOMA users follow a distribution of Poisson point process (PPP), the spatial correlation between MIMO channels are characterized by using Kronecker model, and the composite channel model is used to capture large-scale fading as well as small-scale fading. The introduction of all these factors makes the performance analysis extremely involved. By employing zero-forcing (ZF) detection at the users, this paper derives the exact expressions for both the average outage probability and the average goodput in closed-form. Based on the exact results, the asymptotic analysis is then conducted for both high SNR (/small cell radius) and low SNR (/large cell radius) to reveal more insights into the MIMO-NOMA system behavior. Finally, the validity of analytical results is confirmed via the numerical analysis.

The remainder of the paper is structured as follows. The mathematical model for the MIMO-NOMA system with randomly deployed users is developed in Section \ref{sec:sys_mod}. The exact and the asymptotic analyses are conducted for both the outage probability and the average goodput in Section \ref{sec:per_ana}. The simulation outcomes in Section \ref{sec:num} verify the analytical results along with some discussions. The paper is finally concluded with some important remarks in Section \ref{sec:con}.

\section{System Model}\label{sec:sys_mod}
In this paper, a downlink single-cell network is considered, where users are uniformly distributed in a disk with radius $D$ and the base station is located at the center. Specifically, the users within the disc are assumed to follow a PPP with intensity $\lambda$. Moreover, the base station is equipped with $N_t$ antennas and each user is equipped with $N_r$ antennas. We assume that the base station delivers $M (\le \min{(N_t,N_r)})$ independent data streams to multiple users at the same frequency/time/code via NOMA transmission. In each transmission, the number of users is assumed to be $K$. Then the received signal at the $k$-th nearest user to the base station is given by
\begin{equation}
{{\bf{y}}_k} = {\sqrt{\ell(d_k)}}{{\bf{H}}_{k}}{{\bf{V}}}{{\bf{s}}} + {{\bf{n}}_k},
\end{equation}
where ${{\bf{H}}_{k}} \in {\mathbb C}^{N_r\times N_t}$ is the channel response matrix from the base station to user $k$; ${\bf V}=({\bf v}_1,\cdots,{\bf v}_M) \in {\mathbb C}^{N_t \times M}$ is the transmit beamforming matrix at the base station, i.e., $\left\| {{{\bf{v}}_m}} \right\|=1$ for $m=1,\cdots,M$; ${{\bf{s}}} = ({s}_1,\cdots,{s}_M)$ is the superimposed symbol vector at base station, and the transmit symbol vector is drawn from the zero-mean complex symmetric circularly normal distribution with unit variance, i.e., ${{\bf{s}}} \sim {\cal C N}({\bf 0}, P{\bf I})$, and $P$ is the transmit power; ${{\bf{n_k}}}$ is the additive Gaussian white noise (AWGN) vector with variance $\sigma^2$, i.e., ${{\bf{n_k}}} \sim {\cal C N}({\bf 0}, \sigma^2{\bf I})$. A composite channel model is adopted to thoroughly examine impacts of both small-scale Rayleigh fading and large-scale path loss. In particular, the path loss is modelled by using Friis equation, such that $\ell(d_k)=\mathcal K {d_k}^{-\alpha}$ with path loss exponent $\alpha > 2$, where $d_k$ denotes the distance between the base station and user $k$, $\mathcal K $ represents the free space power received at the reference distance $d_0=1$m. 
Without loss of generality, the distances between the base station and users are sorted in an increasing order such that $d_1 \le d_2 \le \cdots \le d_K$.

For the sake of mathematical tractability in the sequel, a semi-correlated Rayleigh MIMO channel is assumed in this paper. More specifically, ${{\bf{H}}_{k}}$ is a complex circularly symmetric Gaussian matrix with zero mean and transmit-side covariance matrix ${\bf R}_T$. Hence, ${{\bf{H}}_{k}}$ can be written as
\begin{equation}
{{\bf{H}}_{k}} = {\bf H}_w {{\bf R}_T}^{1/2},
\end{equation}
where the elements of matrix ${\bf H}_w$ are i.i.d. complex Gaussian random variables with zero mean and variance ${\sigma_h}^2$, i.e., ${\rm vec}\left({\bf H}_k\right) \sim {\cal C N}({\bf 0}, {\sigma_h}^2 {{\bf R}_T}^{\rm T}\otimes {\bf I}_{{{N_R} }})$, where the notation $(\cdot)^{\rm T}$ refers to the transpose of a matrix and $\otimes$ is the Kronecker product. Besides, we assume that only statistical CSI is available to the base station, but the knowledge of CSI is perfectly known at users' side.

To simultaneously accommodate multiple users over the same frequency/time/code, NOMA scheme is applied herein. The information bearing vector of $M$ data streams is constructed by using superposition coding as follows
\begin{equation}
{\bf{s}} = \sqrt P \sum\limits_{k = 1}^K {\sqrt{\zeta  _k}{{\bf{x}}_k}} ,
\end{equation}
where ${{\bf{x}}_k}=(x_{k,1},\cdots,x_{k,M})^{\rm T}$ is the signal vector intended for user $k$, $\zeta _k$ is the power allocation coefficient, and $\sum\nolimits_{k = 1}^K {{\zeta _k}}  = 1$.

We assume that the received signal is reconstructed by employing a zero-frocing (ZF) detector. The received symbol stream is multiplied by ZF detector $\frac{1}{\sqrt{\ell(d_k)}}{\left( {{{\bf{H}}_k}{\bf{V}}} \right)^\dag } = \frac{1}{\sqrt{\ell(d_k)}}{\left( {{{\bf{V}}^{\rm{H}}}{{\bf{H}}_k}^{\rm{H}}{{\bf{H}}_k}{\bf{V}}} \right)^{ - 1}}{{\bf{V}}^{\rm{H}}}{{\bf{H}}_k}^{\rm{H}}$ so that
\begin{equation}
{\bf{\hat s}} = {\bf{s}} + \frac{1}{\sqrt{\ell(d_k)}}{\left( {{{\bf{V}}^{\rm{H}}}{{\bf{H}}_k}^{\rm{H}}{{\bf{H}}_k}{\bf{V}}} \right)^{ - 1}}{{\bf{V}}^{\rm{H}}}{{\bf{H}}_k}^{\rm{H}}{{\bf{n}}_k},
\end{equation}
where $({\bf X})^\dag$ refers to the pseudo-inverse of $\bf X$. With the linear ZF strategy, the decodings for $M$ data streams can be performed independently. This eventually yields a very low decoding complexity.

After ZF detection, the successive interference cancellation (SIC) as another key component of NOMA technique is adopted to separate superimposed signals sequentially. To simplify the design of power allocation coefficient and better utilize NOMA scheme, the decoding order at receivers is made according to the decreasing order of the distance, i.e., $d_K \ge  \cdots \ge  d_2 \ge d_1$. In particular, user $k$ will detect the message of user $i$ prior to decoding its own message ${\bf x}_k$ such that $i > k$, and then subtracting the signal ${\bf x}_{i}$ from its observation one by one. Meanwhile the signal ${\bf x}_{i}$ intended for user $i$ ($ < k$) will be treated as interference at user $k$. Therefore, for the $k$-th nearest user to the base station, the instantaneous received signal-to-interference-plus-noise ratio (SINR) to detect the signal $x_{i,m}$ that is extracted from data stream $m$ is given by
\begin{align}\label{eqn:sinr_x_im}
&{\gamma _{k \to i,m}} \notag \\
&\quad= \frac{{{\sigma _h}^2P{\zeta _i}}}{{{\sigma _h}^2P\sum\limits_{l = 1}^{i - 1} {{\zeta _l}}  + \frac{1}{\ell ({d_k})}{\sigma ^2}{{\left[ {{{\left( {{{\bf{H}}_k}{\bf{V}}} \right)}^\dag }{{\left( {{{\left( {{{\bf{H}}_k}{\bf{V}}} \right)}^\dag }} \right)}^{\rm{H}}}} \right]}_{mm}}}} \notag\\
&\quad=\frac{{{{\bar \gamma }}{\ell ({d_k})}{\zeta _i}}}{{{{\bar \gamma }}{\ell ({d_k})}\sum\limits_{l = 1}^{i - 1} {{\zeta _l}}  + {{\left[ {{{\left( {{{\left( {{{\bf{H}}_k}{\bf{V}}} \right)}^{\rm{H}}}{{\bf{H}}_k}{\bf{V}}} \right)}^{ - 1}}} \right]}_{mm}}}},\, i \ge k,
\end{align}
where ${{\bar \gamma }} = \frac{{P{\sigma _h}^2}}{{{\sigma ^2}}}$ denotes the average transmit SNR and $[{\bf X}]_{ij}$ is the $(i,j)$-th element of $\bf X$.

%

\section{Performance Analysis}\label{sec:per_ana}

\subsection{Exact Analysis}
Following the proposed MIMO-NOMA scheme, given the distance $d_k$ and the total number of users $K$, the probability of that user $k$ fails to decode its own message for data stream $m$ can be expressed as
\begin{multline}\label{eqn:out_pro_def}
p_{m,K,k}^{out} = \Pr \left[ {\bigcup\limits_{i = k}^K {{{\log }_2}\left( {1 + {\gamma _{k \to i,m}}} \right) < {R_{m,i}}} } \right]\\
= \Pr \left[ {{{\left[ {{{\left( {{{\left( {{{\bf{H}}_k}{\bf{V}}} \right)}^{\rm{H}}}{{\bf{H}}_k}{\bf{V}}} \right)}^{ - 1}}} \right]}_{mm}} > {{{\bar \gamma }_k{\theta _{m,k}}}{\ell ({d_k})}}} \right],
\end{multline}
where ${R_{m,i}}$ is the predefined transmission rate, ${\theta _{m,k}} = \min \left\{ {\left. {\frac{{{\zeta _i}}}{{{2^{{R_{m,i}}}} - 1}} - \sum\nolimits_{l = 1}^{i - 1} {{\zeta _l}} } \right|k \le i \le K} \right\}$ and $\bigcup{(\cdot)}$ stands for the union operation. It is worth mentioning that the condition of ${\theta _{m,k}}>0$ should be satisfied to successfully eliminate multiuser interference while performing SIC for NOMA. Thereupon, the requirement of $\left\{{\theta _{m,k}}>0,k\in[1,K]\right\}$ leads to $\left\{{R_{m,i}} < {\log _2}\left( {1 + \frac{{{\zeta _i}}}{{\sum\nolimits_{l = 1}^{i - 1} {{\zeta _l}} }}} \right),\,i\in[2,K]\right\}$.

By defining ${\bf{Z}} = {\left( {{{\bf{H}}_k}{\bf{V}}} \right)^{\rm{H}}}{{\bf{H}}_k}{\bf{V}}$, we find that $\bf{Z}$ is a complex Wishart matrix, i.e., ${\bf{Z}} \sim {\cal CW}_{{M}}\left( {{N_r},{\bf{\Sigma }}}\right)$, where ${\bf{\Sigma }} = {{\bf{V}}^{\rm{H}}}{{\bf{R}}_T}{\bf{V}}$. By using the concept of Schur complement ${\left[ {{{\bf{Z}}^{ - 1}}} \right]_{mm}} = \frac{{\det {{\bf{Z}}_{mm}}}}{{\det {\bf{Z}}}} = \frac{1}{{\det {\bf{Z}}_{mm}^{sc}}}$, the outage probability $p_{m,K,k}^{out}$ can be rewritten as \cite{gore2002transmit}
\begin{equation}\label{eqn:out_pro_re}
p_{m,K,k}^{out} = \Pr \left[ {\det {\bf{Z}}_{mm}^{sc} < \frac{1}{{{{{{\bar \gamma }}}\theta _{m,k}}{\ell ({d_k})}}}} \right].
\end{equation}

\cite{gore2002transmit} has proved that the Schur complement $\det {\bf{Z}}_{mm}^{sc}$ obeys Gamma distribution with a probability density function (PDF) of
\begin{equation}\label{eqn:schur_pdf}
{f_{\det {\bf{Z}}_{mm}^{sc}}}\left( x \right) = \frac{{{\beta _m}{e^{ - x{\beta _m}}}}}{{\left( {{N_r} - {M}} \right)!}}{\left( {{\beta _m}x} \right)^{{N_r} - {M}}},
\end{equation}
where $\beta_m ={\left[ {{{\bf{\Sigma }}^{ - 1}}} \right]_{mm}}$. Putting (\ref{eqn:schur_pdf}) into (\ref{eqn:out_pro_re}) then gives
\begin{align}\label{eqn:out_prob_gamma}
p_{m,K,k}^{out} 
&= \frac{{\gamma \left( {{N_r} - {M} + 1,\frac{{{\beta _m}}}{{{{\bar \gamma }}{\theta _{m,k}}{\ell ({d_k})}}}} \right)}}{{\Gamma \left( {{N_r} - {M} + 1} \right)}},
\end{align}
where $\gamma(a,b)$ is the lower incomplete gamma function.

Since the locations of $K$ users are uniformly distributed within the disc, the PDF of the Euclidean distance $d_k$ can be obtained by using order statistics as follows \cite{david1970order}
\begin{equation}\label{eqn:d_k_PDF}
{f_{{d_k}}}\left( x \right) = k\left( {\begin{array}{*{20}{c}}
K\\
k
\end{array}} \right){\left( {{F_d}\left( x \right)} \right)^{k - 1}}{\left( {1 - {F_d}\left( x \right)} \right)^{K - k}}{f_d}\left( x \right),
\end{equation}
where ${f_d}\left( x \right)$ and ${F_d}\left( x \right)$ are the PDF and the cumulative distribution function (CDF) of the distance $d$ from an arbitrary user to the base station, and are given by
\begin{equation}\label{eqn:dis_dis}
{f_d}\left( x \right) = \frac{{2x}}{{{D^2}}},{F_d}\left( x \right) = \frac{{{x^2}}}{{{D^2}}},\,x \le D.
\end{equation}

Then taking the expectation of (\ref{eqn:out_prob_gamma}) over the distribution of $d_k$ leads to
\begin{equation}\label{eqn:avg_pout}
\mathbb E{_{{d_k}}}\left( {p_{m,K,k}^{out}} \right) = \int\nolimits_0^D {p_{m,K,k}^{out}{f_{{d_k}}}\left( x \right)dx} \triangleq \bar p_{m,K,k}^{out}.
\end{equation}

Substituting (\ref{eqn:out_prob_gamma}) and (\ref{eqn:d_k_PDF}) into (\ref{eqn:avg_pout}), we have
\begin{align}\label{eqn:out_aver_dk_exp}
{\bar p_{m,K,k}^{out}} 
&= \frac{{2k}}{{\Gamma \left( {{N_r} - {M} + 1} \right)}}\left( {\begin{array}{*{20}{c}}
K\\
k
\end{array}} \right)\sum\limits_{j = 0}^{K - k} {\frac{{{{\left( { - 1} \right)}^j}}}{{{D^{2\left( {k + j} \right)}}}}\left( {\begin{array}{*{20}{c}}
{K - k}\\
j
\end{array}} \right) } \notag\\
&\times \int\limits_0^D {\gamma \left( {{N_r} - {M} + 1,\frac{{{\beta _m}{x^\alpha }}}{{{{\bar \gamma }}{\theta _{m,k}}{\cal K}}}} \right){x^{2\left( {k + j} \right) - 1}}dx}.
\end{align}

To proceed with the analysis, the lower incomplete Gamma function can be rewritten in terms of Meijer-G function by using \cite[eq.06.07.26.0006.01]{wolframe2010math},  i.e.,
\begin{equation}\label{eqn:incom_lower_gam}
\gamma \left( {a,x} \right)= G_{1,2}^{1,1}\left( {\left. {\begin{array}{*{20}{c}}
1\\
{a,0}
\end{array}} \right|x} \right),
\end{equation}
where $G_{p,q}^{m,n}(\cdot)$ refers to Meijer G-function. With (\ref{eqn:incom_lower_gam}), ${\bar p_{m,K,k}^{out}}$ can be further rewritten as
\begin{multline}\label{eqn:out_avg_mg}
{\bar p_{m,K,k}^{out}} = \frac{{2k}\left( {\begin{array}{*{20}{c}}
K\\
k
\end{array}} \right)}{{\Gamma \left( {{N_r} - {M} + 1} \right)}}\sum\limits_{j = 0}^{K - k} {\frac{{{{\left( { - 1} \right)}^j}}}{{{D^{2\left( {k + j} \right)}}}}\left( {\begin{array}{*{20}{c}}
{K - k}\\
j
\end{array}} \right)  }  \\
\times\int\limits_0^D {G_{1,2}^{1,1}\left( {\left. {\begin{array}{*{20}{c}}
1\\
{{N_r} - {M} + 1,0}
\end{array}} \right|\frac{{{\beta _m}{x^\alpha }}}{{{{\bar \gamma }}{\theta _{m,k}}{\cal K}}}} \right){x^{2\left( {k + j} \right) - 1}}dx}.
\end{multline}
By rewriting Meijer G-function in the form of Mellin-Barnes integral, we have
\begin{multline}\label{eqn:out_avg_mgmellin}
{\bar p_{m,K,k}^{out}} =
\frac{{2k}}{{\Gamma \left( {{N_r} - {M} + 1} \right)}}\left( {\begin{array}{*{20}{c}}
K\\
k
\end{array}} \right)\times\\
\sum\limits_{j = 0}^{K - k} {\frac{{{{\left( { - 1} \right)}^j}}}{{{D^{2\left( {k + j} \right)}}}}\left( {\begin{array}{*{20}{c}}
{K - k}\\
j
\end{array}} \right)  } \int\limits_0^D {{x^{2\left( {k + j} \right) - 1}}} \times\\
\frac{1}{{2\pi i}}\int\limits_{c - i\infty }^{c + i\infty } {\frac{{\Gamma \left( {{N_r} - {M} + 1 - s} \right)\Gamma \left( s \right)}}{{\Gamma \left( {1 + s} \right)}}{{\left( {\frac{{{\beta _m}{x^\alpha }}}{{{{\bar \gamma }}{\theta _{m,k}}K}}} \right)}^s}ds}dx .
\end{multline}
By switching the order of the two integrations, it follows that
\begin{multline}\label{eqn:out_avg_mg_sw}
{\bar p_{m,K,k}^{out}}
 = \frac{{2k}\left( {\begin{array}{*{20}{c}}
K\\
k
\end{array}} \right)}{{\Gamma \left( {{N_r} - {M} + 1} \right)}}\sum\limits_{j = 0}^{K - k} {\frac{{{{\left( { - 1} \right)}^j}}}{{{D^{2\left( {k + j} \right)}}}}\left( {\begin{array}{*{20}{c}}
{K - k}\\
j
\end{array}} \right)  } \\
\times\frac{1}{{2\pi i}}\int\limits_{c - i\infty }^{c + i\infty } {\frac{{\Gamma \left( {{N_r} - {M} + 1 - s} \right)\Gamma \left( s \right)}}{{\Gamma \left( {1 + s} \right)}}{{\left( {\frac{{{\beta _m}}}{{{{\bar \gamma }}{\theta _{m,k}}{\cal K}}}} \right)}^s}ds} \\
\times\int\limits_0^D {{x^{\alpha s + 2\left( {k + j} \right) - 1}}dx} ,\, \alpha s + 2\left( {k + j} \right) - 1 \ge 0.
\end{multline}
After some algebraic manipulations, (\ref{eqn:out_avg_mg_sw}) can be simplified as (\ref{eqn:out_avg_mg_alg}),
\begin{figure*}[!t]
\begin{align}\label{eqn:out_avg_mg_alg}
&{\bar p_{m,K,k}^{out}}
 = \frac{{2k}\left( {\begin{array}{*{20}{c}}
K\\
k
\end{array}} \right)}{{\Gamma \left( {{N_r} - {M} + 1} \right)}}\sum\limits_{j = 0}^{K - k} {{{\left( { - 1} \right)}^j}\left( {\begin{array}{*{20}{c}}
{K - k}\\
j
\end{array}} \right)  }\int\limits_{c - i\infty }^{c + i\infty } {\frac{{\Gamma \left( {{N_r} - {M} + 1 - s} \right)\Gamma \left( s \right)}}{{\Gamma \left( {1 + s} \right)}}\frac{{\Gamma \left( {\alpha s + 2\left( {k + j} \right)} \right)}}{{\Gamma \left( {\alpha s + 2\left( {k + j} \right) + 1} \right)}}{{\left( {\frac{{{\beta _m}{D^\alpha }}}{{{{\bar \gamma }}{\theta _{m,k}}{\cal K}}}} \right)}^s}ds}\\
\label{eqn:out_avg_mg_sw_H}
 &= \frac{{2k}}{{\Gamma \left( {{N_r} - {M} + 1} \right)}}\left( {\begin{array}{*{20}{c}}
K\\
k
\end{array}} \right)\sum\limits_{j = 0}^{K - k} {{{\left( { - 1} \right)}^j}\left( {\begin{array}{*{20}{c}}
{K - k}\\
j
\end{array}} \right)  }
H_{2,3}^{1,2}\left[ {\left. {\begin{array}{*{20}{c}}
{\left( {1,1} \right),\left( {1 - 2\left( {k + j} \right),\alpha } \right)}\\
{\left( {{N_r} - {M} + 1,1} \right),\left( { - 2\left( {k + j} \right),\alpha } \right),\left( {0,1} \right)}
\end{array}} \right|\frac{{{\beta _m}{D^\alpha }}}{{{{\bar \gamma }}{\theta _{m,k}}{\cal K}}}} \right].
\end{align}
\hrulefill
\end{figure*}
shown at the top of the next page. (\ref{eqn:out_avg_mg_alg}) can be further expressed in terms of Fox's H function as (\ref{eqn:out_avg_mg_sw_H}), shown at the top of the next page,
where the definition of Fox's H function can be found in \cite[eq.T.I.3]{ansari2017new}.

This paper considers the spatial randomness of the users, and the users are distributed as a PPP with intensity $\lambda$. Moreover, to suppress multiuser interference and reduce the hardware and computational complexity, the number of users in one NOMA group is limited up to $Q$ users, i.e., $K \le Q$. Accordingly, if there are more than $Q$ users in the disc, only $Q$ users are randomly selected from them to form the NOMA group. Hence, the number of users in the NOMA group is a random variable $\mathcal Q$ with probability distribution as
\begin{equation}\label{eqn:Q_dis}
\Pr \left( {{\cal Q} = K} \right) = \left\{ {\begin{array}{*{20}{c}}
{\frac{{{{\left( {\pi {D^2}\lambda } \right)}^K}}}{{K!}}{e^{ - \pi {D^2}\lambda }},}&{0 \le K < Q;}\\
{1 - {e^{ - \pi {D^2}\lambda }}\sum\limits_{q = 0}^{Q-1} {\frac{{{{\left( {\pi {D^2}\lambda } \right)}^q}}}{{q!}}} ,}&{K = Q.}
\end{array}} \right.
\end{equation}

Hence, the average outage probability of the $k$-th nearest user to the base station can be obtained by taking expectation over the probability distribution of $\cal Q$ as
\begin{multline}\label{eqn:out_fin_avg}
{{\tilde p}_{m,k}^{out}} = {\mathbb E_{\cal Q}}\left( {\bar p_{m,K,k}^{out}} \right) = {e^{ - \pi {D^2}\lambda }}\sum\limits_{K = k}^{Q-1} {\frac{{{{\left( {\pi {D^2}\lambda } \right)}^K}}}{{K!}}\bar p_{m,K,k}^{out}}  \\
+ \left( {1 - {e^{ - \pi {D^2}\lambda }}\sum\limits_{q = 0}^{Q-1} {\frac{{{{\left( {\pi {D^2}\lambda } \right)}^q}}}{{q!}}} } \right)\bar p_{m,Q,k}^{out},\,k \le K \le Q.
\end{multline}
It is noteworthy that the NOMA transmission reduces to a point-to-point communication when $k = 1$, the point-to-point communication in the circumstance can be viewed as a special case of NOMA transmission by setting $\alpha_1=1$ and all the other power allocation coefficients equal to zero.

With the outage expressions, the average goodput of the MIMO-NOMA system can be expressed as
\begin{equation}\label{eqn:avg_goodput}
\mathcal T_g = \sum\limits_{m = 1}^M {\sum\limits_{k = 1}^Q {\left( {1 - {e^{ - \pi {D^2}\lambda }}\sum\limits_{q = 0}^{k - 1} {\frac{{{{\left( {\pi {D^2}\lambda } \right)}^q}}}{{q!}}}  - {{\tilde p}_{m,k}^{out}}} \right){R_{k,m}}} }.
\end{equation}
where the goodput is frequently used to evaluate how many bits are successfully delivered per transmission \cite{shi2018energy}.

\subsection{Asymptotic Analysis}
Unfortunately, it is intractable to extract clear insights from the complicated expressions of the average outage probability and the average goodput due to the involvement of special functions. We thus proceed to carry out the asymptotic analysis for the outage probability and the goodput to gain more insightful results.
To start with, the asymptotic expression for ${\bar p_{m,K,k}^{out}}$ is derived first. (\ref{eqn:out_avg_mg_sw_H}) can be rewritten as
\begin{multline}\label{eqn:out_sim_r}
{\bar p_{m,K,k}^{out}} = \frac{{2k}}{{\Gamma \left( {{N_r} - {M} + 1} \right)}}\left( {\begin{array}{*{20}{c}}
K\\
k
\end{array}} \right)\sum\limits_{j = 0}^{K - k} {{\left( { - 1} \right)}^j}\left( {\begin{array}{*{20}{c}}
{K - k}\\
j
\end{array}} \right)\\
\times\frac{1}{{2\pi i}}\int\limits_{c - i\infty }^{c + i\infty } {\frac{{\Gamma \left( {{N_r} - {M} + 1 - s} \right)}}{{s\left( {\alpha s + 2\left( {k + j} \right)} \right)}}{{\left( {\frac{{{\beta _m}{D^\alpha }}}{{{{\bar \gamma }}{\theta _{m,k}}{\cal K}}}} \right)}^s}ds}.
\end{multline}
Since ${{\alpha s + 2\left( {k + j} \right) - 1}} > 1$, $c$ can be set as $c=0.5$. The asymptotic analysis can be split into two cases on the basis of whether ${\frac{{{\beta _m}{D^\alpha }}}{{{{\bar \gamma }}{\theta _{m,k}}{\cal K}}}} $ tends to $0$ or $\infty$.
\subsubsection{Asymptotic Analysis for High SNR or Small $D$}
Noticing that there are infinite number of simple poles on the right-hand-side of the vertical line from ${c - i\infty }$ to ${c + i\infty }$. Then applying Cauchy's residue theorem to (\ref{eqn:out_sim_r}), we arrive at
\begin{multline}\label{eqn:out_sim_re_exp}
{\bar p_{m,K,k}^{out}} = \frac{{ - 2k}}{{\Gamma \left( {{N_r} - {M} + 1} \right)}}\left( {\begin{array}{*{20}{c}}
K\\
k
\end{array}} \right)\sum\limits_{j = 0}^{K - k} {{\left( { - 1} \right)}^j}\left( {\begin{array}{*{20}{c}}
{K - k}\\
j
\end{array}} \right)\\
\sum\limits_{v = {N_r} - {M} + 1}^\infty  {{\rm{Res}}\left( {\frac{{\Gamma \left( {{N_r} - {M} + 1 - s} \right)}}{{s\left( {\alpha s + 2\left( {k + j} \right)} \right)}}{{\left( {\frac{{{\beta _m}{D^\alpha }}}{{{{\bar \gamma }}{\theta _{m,k}}{\cal K}}}} \right)}^s},v} \right)} ,
\end{multline}
where ${\rm{Res}}(\cdot,v)$ represents the residue at $s=v$. ${\bar p_{m,K,k}^{out}}$ can be expressed by calculating the residue as
\begin{align}\label{eqn:out_redi_th_alg}
&{\bar p_{m,K,k}^{out}} = \frac{{ - 2k}}{{\Gamma \left( {{N_r} - {M} + 1} \right)}}\left( {\begin{array}{*{20}{c}}
K\\
k
\end{array}} \right)\sum\limits_{j = 0}^{K - k} {{\left( { - 1} \right)}^j}\left( {\begin{array}{*{20}{c}}
{K - k}\\
j
\end{array}} \right)\times\notag\\
&\sum\limits_{v = {N_r} - {M} + 1}^\infty  {{{\left[ {\frac{{\left( {s - v} \right)\Gamma \left( {{N_r} - {M} + 1 - s} \right)}}{{s\left( {\alpha s + 2\left( {k + j} \right)} \right)}}{{\left( {\frac{{{\beta _m}{D^\alpha }}}{{{{\bar \gamma }}{\theta _{m,k}}{\cal K}}}} \right)}^s}} \right]}_{s = v}}}\notag\\
& = \frac{{2k}}{{\Gamma \left( {{N_r} - {M} + 1} \right)}}\left( {\begin{array}{*{20}{c}}
K\\
k
\end{array}} \right)\sum\limits_{j = 0}^{K - k} {{\left( { - 1} \right)}^j}\left( {\begin{array}{*{20}{c}}
{K - k}\\
j
\end{array}} \right)\times\notag\\
&{\sum\limits_{v = {N_r} - {M} + 1}^\infty  {{{\left[ {\frac{{\Gamma \left( {v + 1 - s} \right){{\left( {\frac{{{\beta _m}{D^\alpha }}}{{{{\bar \gamma }}{\theta _{m,k}}{\cal K}}}} \right)}^s}}}{{\prod\limits_{u = {N_r} - {M} + 1}^{v - 1} {\left( {u - s} \right)} s\left( {\alpha s + 2\left( {k + j} \right)} \right)}}} \right]}_{s = v}}} }  .
\end{align}
After some algebraic manipulations, ${\bar p_{m,K,k}^{out}}$ consequently can be expanded as
\begin{align}\label{eqn:out_fina_exp}
&{\bar p_{m,K,k}^{out}} = \frac{{2k}}{{\Gamma \left( {{N_r} - {M} + 1} \right)}}\left( {\begin{array}{*{20}{c}}
K\\
k
\end{array}} \right){\left( {\frac{{{\beta _m}{D^\alpha }}}{{{{\bar \gamma }}{\theta _{m,k}}{\cal K}}}} \right)^{{N_r} - {M} + 1}}\times\notag\\
&\quad \quad \quad\sum\limits_{j = 0}^{K - k} {{{\left( { - 1} \right)}^j}\left( {\begin{array}{*{20}{c}}
{K - k}\\
j
\end{array}} \right)} \times\notag\\
& \sum\limits_{\tau  = 0}^\infty  {\frac{{{{\left( { - \frac{{{\beta _m}{D^\alpha }}}{{{{\bar \gamma }}{\theta _{m,k}}{\cal K}}}} \right)}^\tau }}}{{\tau !\left( {\tau  + {N_r} - {M} + 1} \right)\left( {\alpha \left( {\tau  + {N_r} - {M} + 1} \right) + 2\left({k + j} \right)} \right)}}}.
\end{align}
With (\ref{eqn:out_fina_exp}), ${\bar p_{m,K,k}^{out}}$ can be rewritten as (\ref{eqn:out_asyp}), shown at the top of the next page,
\begin{figure*}[!t]
\normalsize
\begin{equation}\label{eqn:out_asyp}
{\bar p_{m,K,k}^{out}} = {{{\bar \gamma }}^{ - \left( {{N_r} - {M} + 1} \right)}}
\underbrace {\frac{{2k}}{{\Gamma \left( {{N_r} - {M} + 2} \right)}}\left( {\begin{array}{*{20}{c}}
K\\
k
\end{array}} \right){\left( {\frac{{{\beta _m}{D^\alpha }}}{{{\theta _{m,k}}K}}} \right)^{{N_r} - {M} + 1}}\sum\limits_{j = 0}^{K - k} {\frac{{{{\left( { - 1} \right)}^j}\left( {\begin{array}{*{20}{c}}
{K - k}\\
j
\end{array}} \right)}}{{\alpha \left( {{N_r} - {M} + 1} \right) + 2\left( {k + j} \right)}}} }_{{\vartheta _{m,K,k}}}
+o\left( {{{\bar \gamma }}^{ - \left( {{N_r} - {M} + 1} \right)}} \right).
\end{equation}
\hrulefill
\end{figure*}
where $o(\cdot)$ denotes the little-O notation, and we say $f(x) \in o(\phi(x))$ if $\lim {{f\left( x \right)} \mathord{\left/
 {\vphantom {{f\left( x \right)} {g\left( x \right)}}} \right.
 \kern-\nulldelimiterspace} {g\left( x \right)}} = 0$.

Putting (\ref{eqn:out_asyp}) into (\ref{eqn:out_fin_avg}), the resultant asymptotic expression of the average outage probability is given by
\begin{multline}\label{eqn:avg_out_asy}
{{\tilde p}_{m,k}^{out}} \approx {{\bar \gamma }}^{ - \left( {{N_r} - {M} + 1} \right)}\\
\times\left( \begin{array}{l}
{e^{ - \pi {D^2}\lambda }}\sum\limits_{K = k}^{Q-1} {\frac{{{{\left( {\pi {D^2}\lambda } \right)}^K}}}{{K!}}{\vartheta _{m,K,k}}} \\
 + \left( {1 - {e^{ - \pi {D^2}\lambda }}\sum\limits_{q = 0}^{Q-1} {\frac{{{{\left( {\pi {D^2}\lambda } \right)}^q}}}{{q!}}} } \right){\vartheta _{m,Q,k}}
\end{array} \right).
\end{multline}
From (\ref{eqn:avg_out_asy}), the insightful results regarding the diversity order and the outage scaling law against $D$ can be found in the following theorem, where the diversity order is commonly used to characterize the number of degrees of freedom in communication systems, and can be evaluated as $\delta = -\mathop {\lim }\limits_{\gamma  \to \infty } \frac{{\log \left({{\tilde p}_{m,k}^{out}}\right)}}{{\log \left( \gamma  \right)}}$.
\begin{theorem}\label{the:out_lowD}
The diversity order $\delta $ associated with the reception performance of the $k$-th nearest user to the base station is ${{N_r} - {M} + 1}$. In addition, as $D \to 0$, the average outage probability ${{\tilde p}_{m,k}^{out}}$ follows a scaling law of $O\left({D^{\alpha \left( {{N_r} - {M} + 1} \right) + 2k}}\right)$, where $O(\cdot)$ denotes the big-O notation.
\end{theorem}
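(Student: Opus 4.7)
The plan is to read off both claims from the high-SNR asymptotic expansion (\ref{eqn:avg_out_asy}), which already exhibits the dependence on $\bar\gamma$ and $D$ in closed form. The proof has two independent parts, both relatively short: the diversity order follows by direct substitution into its definition, while the small-$D$ scaling law follows by a Taylor expansion of the Poisson weights in (\ref{eqn:avg_out_asy}) and an identification of the lowest-order surviving monomial in $D$.

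For the diversity order, I would apply the definition $\delta = -\lim_{\bar\gamma\to\infty}\log(\tilde p_{m,k}^{out})/\log(\bar\gamma)$ directly to (\ref{eqn:avg_out_asy}). Everything except the explicit prefactor $\bar\gamma^{-(N_r-M+1)}$ is independent of $\bar\gamma$: the coefficients $\vartheta_{m,K,k}$, the Poisson weights $(\pi D^2\lambda)^K/K!$, and the truncation remainder are all $\bar\gamma$-free. Consequently $\log(\tilde p_{m,k}^{out}) = -(N_r-M+1)\log\bar\gamma + O(1)$ as $\bar\gamma\to\infty$, and the definition yields $\delta = N_r-M+1$ immediately.

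For the small-$D$ scaling, I would examine the two contributions in (\ref{eqn:avg_out_asy}) separately. In the first sum, $\vartheta_{m,K,k}$ contains an explicit factor $D^{\alpha(N_r-M+1)}$, and the Poisson weight contributes $D^{2K}$ (with $e^{-\pi D^2\lambda}\to 1$ as $D\to 0$); since the summation starts at $K=k$, the dominant term corresponds to $K=k$ and scales as $D^{\alpha(N_r-M+1)+2k}$. For the second contribution, the bracket $1-e^{-\pi D^2\lambda}\sum_{q=0}^{Q-1}(\pi D^2\lambda)^q/q!$ equals the tail of the Poisson series, which for small $D$ behaves as $(\pi D^2\lambda)^Q/Q! + o(D^{2Q})$, and multiplying by $\vartheta_{m,Q,k}\sim D^{\alpha(N_r-M+1)}$ gives a term of order $D^{\alpha(N_r-M+1)+2Q}$. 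Since $Q\ge k$, this is subdominant, and the overall scaling is $O(D^{\alpha(N_r-M+1)+2k})$, as claimed.

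The main obstacle I anticipate is bookkeeping rather than conceptual: I have to justify that the $o(\bar\gamma^{-(N_r-M+1)})$ remainder in (\ref{eqn:out_asyp}), which was obtained by a residue expansion around the poles $s=v\ge N_r-M+1$, remains subdominant \emph{uniformly} in $D$ in a neighbourhood of $D=0$, so that the small-$D$ expansion can be truncated at the $v=N_r-M+1$ residue. This follows because each subsequent residue in (\ref{eqn:out_fina_exp}) contributes an extra factor $(\beta_m D^\alpha/(\bar\gamma\theta_{m,k}\mathcal{K}))$, which is small in both asymptotic regimes of interest, so the $\tau=0$ term in (\ref{eqn:out_fina_exp}) genuinely dominates and produces the prefactor $\vartheta_{m,K,k}$ used above.
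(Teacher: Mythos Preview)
Your proposal is correct and follows essentially the same route as the paper: both read the diversity order directly off the prefactor $\bar\gamma^{-(N_r-M+1)}$ in (\ref{eqn:avg_out_asy}), and both obtain the $D$-scaling by combining the explicit $D^{\alpha(N_r-M+1)}$ in $\vartheta_{m,K,k}$ with the lowest surviving Poisson weight $D^{2k}$. The paper condenses your two-term analysis by first rewriting the tail probability as $e^{-\pi D^2\lambda}\sum_{q=Q}^\infty(\pi D^2\lambda)^q/q!$ so that everything sits under a single sum, but the content is identical; your extra remark on the uniformity of the $o(\bar\gamma^{-(N_r-M+1)})$ remainder is a welcome clarification that the paper omits.
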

\begin{proof}
Please see Appendix \ref{app:out_lowD}.
\end{proof}
From Theorem \ref{the:out_lowD}, it is readily found that the outage probability falls off as $D$ decreases, this is due to the fact that the reduction of transmission distance between the user and the base station yields a smaller path loss. However,
the small outage does not signify that the average goodput would increase with the decrease of $D$. Specifically, by utilizing the scaling law presented in Theorem \ref{the:out_lowD}, the asymptotic expression of the average goodput can be obtained as
\begin{multline}\label{eqn:goodput_asy_smallD}
{{\cal T}_g} = \sum\limits_{m = 1}^M {\sum\limits_{k = 1}^Q {\left( \begin{array}{l}
1 - {e^{ - \pi {D^2}\lambda }}\sum\limits_{q = 0}^{k - 1} {\frac{{{{\left( {\pi {D^2}\lambda } \right)}^q}}}{{q!}}} \\
 + O\left( {{D^{\alpha \left( {{N_r} - {M} + 1} \right) + 2k}}} \right)
\end{array} \right){R_{k,m}}} } \\
= {e^{ - \pi {D^2}\lambda }}\sum\limits_{m = 1}^M {\sum\limits_{k = 1}^Q {{R_{k,m}}\sum\limits_{q = k}^\infty  {\frac{{{{\left( {\pi {D^2}\lambda } \right)}^q}}}{{q!}}} } }  \\
+ O\left( {{D^{\alpha \left( {{N_r} - {M} + 1} \right) + 2k}}} \right)
 \propto {D^{2}},
\end{multline}
where $\propto$ stands for the operator of ``directly proportional to''. Accordingly, the average goodput ${{\cal T}_g}$ follows a scaling law of $O({D^{2}})$ as $D \to 0$. Although the path loss diminishes as $D$ decreases, the mean number of users, $\pi D^2\lambda$, decreases as well, and it dominates the goodput, which eventually causes the reduction of the goodput.
\subsubsection{Asymptotic Analysis for Low SNR or Large $D$}
In addition, it is of great significance to examine the trend of the outage probability with the increase of $D$ or with the decrease of SNR. Here we take the asymptotic analysis for large $D$ as an example, the asymptotic analysis for low SNR can be conducted in the same way. However, unlike Theorem \ref{the:out_lowD}, as $D$ approaches to infinity, i.e., $D \to \infty$, the average outage probability would behave differently.

By using \cite[Theorem 1.7]{kilbas2004h}, as ${\frac{{{D^\alpha }}}{{{{\bar \gamma }}}}} \to \infty$, the asymptotic expression of the outage probability can be obtained as (\ref{eqn:out_lowsnr_larD}), shown at the top of the next page,
\begin{figure*}[!t]
\normalsize
\begin{align}\label{eqn:out_lowsnr_larD}
&{\bar p_{m,K,k}^{out}} \notag\\
&= \frac{{2k}}{{\Gamma \left( {{N_r} - {M} + 1} \right)}}\left( {\begin{array}{*{20}{c}}
K\\
k
\end{array}} \right)\sum\limits_{j = 0}^{K - k} {{{\left( { - 1} \right)}^j}} \left( {\begin{array}{*{20}{c}}
{K - k}\\
j
\end{array}} \right)\left( \begin{array}{l}
{\rm{Res}}\left( {\frac{{\Gamma \left( {{N_r} - {M} + 1 - s} \right)}}{{s\left( {\alpha s + 2\left( {k + j} \right)} \right)}}{{\left( {\frac{{{\beta _m}{D^\alpha }}}{{{{\bar \gamma }}{\theta _{m,k}}K}}} \right)}^s},0} \right)\\
 + {\rm{Res}}\left( {\frac{{\Gamma \left( {{N_r} - {M} + 1 - s} \right)}}{{s\left( {\alpha s + 2\left( {k + j} \right)} \right)}}{{\left( {\frac{{{\beta _m}{D^\alpha }}}{{{{\bar \gamma }}{\theta _{m,k}}K}}} \right)}^s}, - \frac{{2\left( {k + j} \right)}}{\alpha }} \right) + o\left( {{{\left( {\frac{{{D^\alpha }}}{{{{\bar \gamma }}}}} \right)}^{ - \frac{{2\left( {k + j} \right)}}{\alpha }}}} \right)
\end{array} \right) \notag\\
&\approx \frac{{2k}}{{\Gamma \left( {{N_r} - {M} + 1} \right)}}\left( {\begin{array}{*{20}{c}}
K\\
k
\end{array}} \right)\sum\limits_{j = 0}^{K - k} {{{\left( { - 1} \right)}^j}} \left( {\begin{array}{*{20}{c}}
{K - k}\\
j
\end{array}} \right)\left( {\frac{{\Gamma \left( {{N_r} - {M} + 1} \right)}}{{2\left( {k + j} \right)}} - \frac{{\Gamma \left( {{N_r} - {M} + 1 + \frac{{2\left( {k + j} \right)}}{\alpha }} \right)}}{{2\left( {k + j} \right)}}{{\left( {\frac{{{\beta _m}{D^\alpha }}}{{{{\bar \gamma }}{\theta _{m,k}}K}}} \right)}^{ - \frac{{2\left( {k + j} \right)}}{\alpha }}}} \right) \notag\\
&=1 - \left( {\begin{array}{*{20}{c}}
K\\
k
\end{array}} \right)\frac{k}{{\Gamma \left( {{N_r} - {M} + 1} \right)}}\sum\limits_{j = 0}^{K - k} {{{\left( { - 1} \right)}^j}} \left( {\begin{array}{*{20}{c}}
{K - k}\\
j
\end{array}} \right)\frac{{\Gamma \left( {{N_r} - {M} + 1 + \frac{{2\left( {k + j} \right)}}{\alpha }} \right)}}{{k + j}}{\left( {\frac{{{\beta _m}{D^\alpha }}}{{{{\bar \gamma }}{\theta _{m,k}}K}}} \right)^{ - \frac{{2\left( {k + j} \right)}}{\alpha }}},\,D\to \infty,
\end{align}
\hrulefill
\vspace*{4pt}
\end{figure*}
where the last step holds by using the following identity, and the proof can be found in Appendix \ref{app:iden}.
\begin{equation}\label{eqn:identi}
\left( {\begin{array}{*{20}{c}}
K\\
k
\end{array}} \right)\sum\limits_{j = 0}^{K - k} {{{\left( { - 1} \right)}^j}} \left( {\begin{array}{*{20}{c}}
{K - k}\\
j
\end{array}} \right)\frac{k}{{k + j}} = 1.
\end{equation}

With (\ref{eqn:out_lowsnr_larD}), as $D$ increases, the average outage probability can be asymptotically expressed by noticing that $e^{{-\pi {D^2}\lambda }}\sum\nolimits_{q = 0}^{Q-1} {\frac{{{{\left( {\pi {D^2}\lambda } \right)}^q}}}{{q!}}}=o(D^{-2k})$ as
\begin{align}\label{eqn:out_fin_avglardD}
{{\tilde p}_{m,k}^{out}} = \bar p_{m,Q,k}^{out} + o(D^{-2k}).
\end{align}
Unlike the result in Theorem \ref{the:out_lowD}, the average outage probability approaches to 1 as $D \to \infty$. This is due to the increase of path loss when $D$ tends to $\infty$.

Moreover, plugging (\ref{eqn:out_fin_avglardD}) into (\ref{eqn:avg_goodput}) results in
\begin{align}\label{eqn:avg_goodput_large_D}
{{\cal T}_g} &= \sum\limits_{m = 1}^M {\sum\limits_{k = 1}^Q {\left( \begin{array}{l}
\left( {\begin{array}{*{20}{c}}
K\\
k
\end{array}} \right)\frac{{\Gamma \left( {{N_r} - {M} + 1 + \frac{{2k}}{\alpha }} \right)}}{{\Gamma \left( {{N_r} - {M} + 1} \right)}}\\
 \times {\left( {\frac{{{\beta _m} }}{{{{\bar \gamma }}{\theta _{m,k}}K}}} \right)^{ - \frac{{2k}}{\alpha }}}{D^{ - 2k}} + o\left( {{D^{ - 2k}}} \right)
\end{array} \right){R_{k,m}}} }\notag \\
 &\propto {D^{ - 2}}.
\end{align}
In contrast to the case of $D \to 0$, the goodput follows an opposite scaling law against $D$, that is, the goodput scales as $O\left({D^{- 2}}\right)$ as $D \to \infty$. However, this is not beyond our expectation, and can be explained as follows. The mean number of users and the average outage probability still exhibit two opposite impacts on the goodput. Unlike the case of $D \to 0$, the outage probability dominates the goodput as $D \to \infty$, and the increase of outage probability then deteriorates the goodput.

\section{Numerical Analysis and Discussions}\label{sec:num}
In this section, Monte Carlo simulations are provided to validate the theoretical analysis and the numerical results are also discussed. For illustration, the system parameters are set as follows. The small-scale fading channels are Rayleigh distributed with unit mean power, i.e.,  ${\sigma_h}^2=1$, the AWGN follows a complex Gaussian distribution with unit variance, i.e., $\sigma^2=1$, the path loss exponent is $\alpha=3$ and $\mathcal K=1$. All users are assumed to have the same transmission rate $R_{m,i}$ to ensure the fairness among users, i.e., $R_{m,i}=R$, and according to the prerequisites of NOMA transmission stated in (\ref{eqn:out_pro_def}), the power allocation coefficients are chosen such that ${\zeta _k} = \left( {1 - \sum\nolimits_{l = k + 1}^K {{\zeta _l}} } \right)\left( {1 - \varepsilon {2^{ - R}}} \right), k = 2,3,...,K$ and ${\zeta _1} = 1 - \sum\nolimits_{l = 2}^K {{\zeta _l}} $, where $\varepsilon$ is within $\left[ {0,1} \right]$, the transmit beamforming matrix $\bf V$ is set with ones on the principal diagonal and zeros elsewhere, the antenna correlation is modelled by using the common exponential correlation model as ${\bf R}_T = \left( [\rho^{|i-j|}]_{i,j}\right)$ \cite{li2017performance}. Unless otherwise specified, the system parameters $N_t$, $N_r$, $M$, $Q$, $\varepsilon$, $\rho$, $\bar \gamma$, $D$, $\lambda$ and $R$ are set to $2$, $3$, $2$, $3$, $0.5$, $0.5$, $60$dB, $30$m, $10^{-3}$m${}^{-2}$ and $2$bps/Hz, respectively.

Fig. \ref{fig:ver_out} plots the outage probability $\tilde{p}^{out}_{m=1,k=1}$ against the average transmit SNR $\bar \gamma$ under three different values of $D$. In the figure, the exact results, the asymptotic results and the simulation results are in perfect agreement, which confirms the correctness of the foregoing theoretical analysis. As can be observed in the figure, the curves for the outage probabilities under different values of $D$ gradually become parallel as $\bar \gamma$ increases. This is due to the fact that the slope of the curves against the SNR on a log-log scale is equivalent to the diversity order and the diversity order is independent of $D$, i.e., $\delta =N_r-M+1=2$. Thus it further justifies the validity of Theorem \ref{the:out_lowD}. 
\begin{figure}
  \centering
  \includegraphics[width=2.3in]{./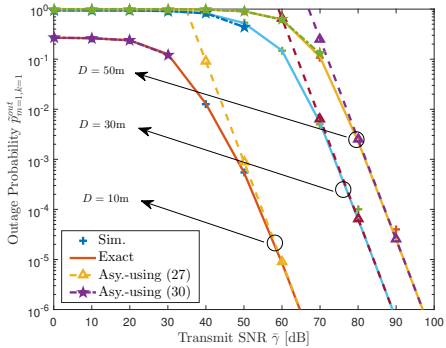}
  \caption{The outage probability $\tilde{p}^{out}_{m=1,k=1}$ versus the average transmit SNR $\bar \gamma$.}\label{fig:ver_out}
\end{figure}

The exact and the asymptotic analyses regarding the goodput $\mathcal T_g$ are also verified via the comparison with the simulation results, as shown in Fig. \ref{fig:ver_gp}. Apparently, the goodput is an increasing function of $\bar \gamma$, which coincides with the numerical results in Fig. \ref{fig:ver_gp}. It can be seen from the figure that the gooput converges to an upper bound as the average transmit SNR $\bar \gamma$ increases, and the upper bound of the goodput is undoubtedly determined by the cell radius $D$ from (\ref{eqn:avg_goodput}) when $\tilde{p}^{out}_{m,k}$ approaches to zero. Though it is roughly shown in Fig. \ref{fig:ver_gp} that the upper bound of the goodput is an increasing function of $D$, it does not mean that the goodput is also an increasing function of $D$. In fact, given $\bar \gamma$, from the preceding asymptotic analyses for both large $D$ and small $D$, the goodput would vanish no matter whether $D$ approaches to zero or infinity, which can be further justified in Fig. \ref{fig:D_gp}.

\begin{figure}
  \centering
  \includegraphics[width=2.3in]{./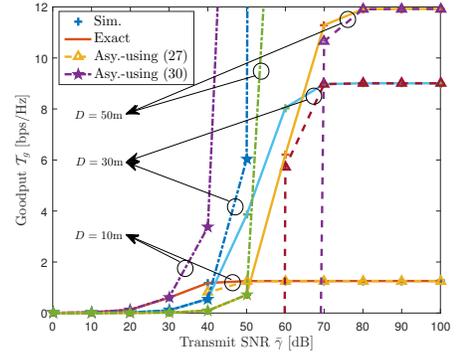}
  \caption{The goodput $\mathcal T_g$ versus the average transmit SNR $\bar \gamma$.}\label{fig:ver_gp}
\end{figure}

Fig. \ref{fig:D_gp} depicts the relationship between the goodput $\mathcal T_g$ and the cell radius $D$, in which the asymptotic results obtained by using (\ref{eqn:out_asyp}) and (\ref{eqn:out_lowsnr_larD}) perfectly match the exact results under small $D$ and large $D$, respectively. In both cases, the goodput converges to zero. Evidently there exists a maximal value of the goodput that can be reached through optimizing the cell radius $D$. For example, the optimal cell radius $D$ is around $35$m in Fig. \ref{fig:ver_gp}.
%

\begin{figure}
  \centering
  \includegraphics[width=2.3in]{./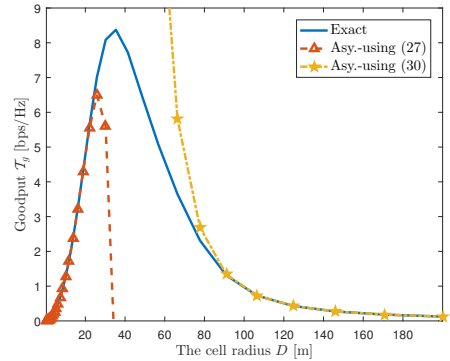}
  \caption{The goodput $\mathcal T_g$ versus the cell radius $D$.}\label{fig:D_gp}
\end{figure}

Moreover, Fig. \ref{fig:corr_gp} shows the impact of the antenna correlation on the goodput, where the strength of the spatial correlation is characterized by $\rho$. As shown in the figure, the spatial correlation negatively influences the goodput, and it can be explained as follows. It is not hard to find from both (\ref{eqn:out_asyp}) and (\ref{eqn:out_lowsnr_larD}) that the outage probability is an increasing function of $\beta_m$. For the exponential correlation model, $\beta_m$ is either $\frac{1}{1-\rho^2}$ or $\frac{1-\rho^4}{(1-\rho^2)^2}$. Clearly, $\beta_m$ is an increasing function of $\rho$. Thus the outage probability increases with $\rho$, which consequently leads to the reduction of the goodput.
\begin{figure}
  \centering
  \includegraphics[width=2.3in]{./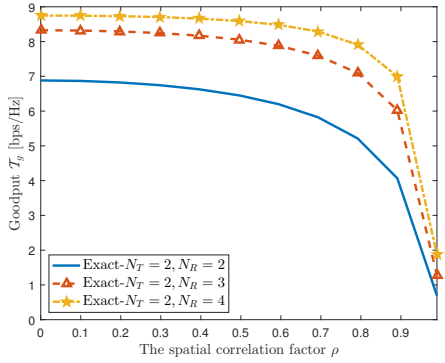}
    \caption{The goodput $\mathcal T_g$ versus the spatial correlation factor $\rho$.}\label{fig:corr_gp}
\end{figure}
\section{Conclusions}\label{sec:con}


This paper has thoroughly investigated the performance of MIMO-NOMA systems by assuming the randomly deployed NOMA users, the spatially correlated channels and the composite channel model. Based on ZF detection, the exact expressions for both the average outage probability and the average goodput have been derived in closed-form. The tractable exact results have enabled the asymptotic analyses with clear insights. More specifically, it has been proved that the diversity order is given by $\delta = {{N_r} - {M} + 1}$, the average outage probability of $k$-th nearest user to the base station follows a scaling law of $O\left({D^{\alpha \left( {{N_r} - {M} + 1} \right) + 2k}}\right)$, the average goodput scales as $O({D^{2}})$ and $O({D^{-2}})$ as $D \to 0$ and $D \to \infty$, respectively. The trivially simple expressions would greatly facilitate the future optimal system design and also have a wide range of other potential applications.

\section{Acknowledgements}
This work was supported in part by National Natural Science Foundation of China under grant 61601524, in part by the Macau Science and Technology Development Fund under grants
091/2015/A3 and 020/2015/AMJ, and in part by the Research Committee of University of Macau under grants MYRG2014-00146-FST and MYRG2016-00146-FST.
\appendices
\section{Proof of Theorem \ref{the:out_lowD}}\label{app:out_lowD}
It is clear from (\ref{eqn:avg_out_asy}) that the diversity order is $\delta = {{N_r} - {M} + 1}$. Moreover, in order to demonstrate the outage scaling law, (\ref{eqn:avg_out_asy}) can be rewritten as
\begin{multline}\label{eqn:out_rewr}
{{\tilde p}_{m,k}^{out}} = {{\bar \gamma }}^{ - \left( {{N_r} - {M} + 1} \right)}{e^{ - \pi {D^2}\lambda }} \\
 \times \left( {\sum\limits_{K = k}^{Q-1} {\frac{{{{\left( {\pi {D^2}\lambda } \right)}^K}}}{{K!}}{\vartheta _{m,K,k}}}  + {\vartheta _{m,Q,k}}\sum\limits_{q = Q }^\infty {\frac{{{{\left( {\pi {D^2}\lambda } \right)}^q}}}{{q!}}} } \right).
\end{multline}
By using the definition of ${\vartheta _{m,K,k}}$ in (\ref{eqn:out_asyp}), the outage scaling law with respect to $D$ directly follows.
\section{Proof of (\ref{eqn:identi})}\label{app:iden}
To prove (\ref{eqn:identi}), we first define the function $\phi\left( x \right) = k{x^k}\sum\limits_{j = 0}^{K - k} {{x^j}} \left( {\begin{array}{*{20}{c}}
{K - k}\\
j
\end{array}} \right)\frac{1}{{k + j}}$. Taking the first derivative of $\phi\left( x \right)$ with respect to $x$, it follows that
\begin{align}\label{eqn:iden_der}
\phi'\left( x \right)
&= k{x^{k - 1}}{\left( {1 + x} \right)^{K - k}}.
\end{align}
Accordingly, $\phi(-1)$ can be rewritten as
\begin{align}\label{eqn:phi_rew}
\phi\left( -1 \right) &= \int\nolimits_0^{ - 1} {\phi'\left( x \right)dx}  + \phi\left( 0 \right)  
=k{\left( { - 1} \right)^k}{\rm{B}}\left( {k,K - k + 1} \right),
\end{align}
where ${\rm B}(a,b)=\frac{\Gamma(a)\Gamma(b)}{\Gamma(a+b)}$ denotes Beta function. Comparing the definition of $\phi\left( x \right)$ with (\ref{eqn:phi_rew}) yields (\ref{eqn:identi}).
\bibliographystyle{ieeetran}
\bibliography{mimo_zf_noma}

\end{document}